\documentclass[11pt]{article}

\usepackage{graphicx} 
\usepackage{algorithm}
\usepackage{algpseudocode}
\usepackage{svg}
\usepackage{enumitem}
\usepackage[margin=1in]{geometry}
\usepackage[square,numbers]{natbib}
\usepackage{booktabs}
\usepackage{amsmath}
\usepackage{amsthm}

\theoremstyle{plain}
\newtheorem{theorem} {Theorem} [section]

\theoremstyle{definition}

\newcommand{\lpecc}{\text{LP}_\text{ECC}}
\newcommand{\lpvc}{\text{LP}_\text{VC}}
\newcommand{\lpcp}{\text{LP}_\text{CP}}
\newcommand{\ipcp}{\text{BLP}_\text{CP}}
\newcommand{\minecc}{\textsc{MinECC}}
\newcommand{\ECC}{\textsc{ECC}}

\newcommand{\vc}{\textsc{Vertex Cover}}
\newcommand{\opt}[1]{\textsc{OPT}(#1)}

\newcommand{\LRVC}{\textsf{LocalRatioVC}}
\newcommand{\LRECC}{\textsf{LocalRatioECC}}

\usepackage{url}
\usepackage{hyperref}
\hypersetup{breaklinks, urlcolor=blue, colorlinks, citecolor=green!50!black, linkcolor=blue!80!black}

\title{An Improved Combinatorial Algorithm for Edge-Colored Clustering in Hypergraphs}

\author{Seongjune Han$^{1}$, Nate Veldt$^{2}$\\
	\\
	\normalsize{$^{1}$Department of Mathematics, Texas A\&M University, Texas, USA}\\
	\normalsize{$^{2}$Department of Computer Science and Engineering, Texas A\&M University, Texas, USA}\\
	\\
}
\date{}

\begin{document}

\maketitle

\begin{abstract}
Many complex systems and datasets are characterized by multiway interactions of different categories, and can be modeled as edge-colored hypergraphs. We focus on clustering such datasets using the NP-hard edge-colored clustering problem, where the goal is to assign colors to nodes in such a way that node colors tend to match edge colors. A key focus in prior work has been to develop approximation algorithms for the problem that are combinatorial and easier to scale. In this paper, we present the first combinatorial approximation algorithm with an approximation factor better than 2. 
\end{abstract}

\section{Introduction}
Graphs are a powerful model for representing pairwise relationships between entities in a dataset or complex real-world system. However, many systems and datasets involve multiway interactions among several entities at once (e.g., group emails, co-authored online articles, sets of co-reviewed retail products on e-commerce platforms) which cannot be fully captured by graphs. 
Furthermore, these multiway interactions are often associated with a certain {type} or {category} (e.g., email topic, venue where an article is published, IP address of product reviewer). One fundamental task for analyzing these types of datasets is to perform categorical clustering, where the goal is to identify groups of objects that tend to be in multiway interactions of a certain category. One formal objective for this problem is minimum edge-colored clustering in hypergraphs (\minecc{})~\cite{amburg2020clustering}. For this problem, the input is represented by an edge-colored hypergraph, where hyperedges represent multiway interactions and colors represent interaction types. The goal is to color nodes in a way that minimizes the weight of edges that are \emph{unsatisfied}. An edge is unsatisfied if it contains at least one node that is \emph{not} assigned the edge's color. Although the problem is NP-hard, many approximation algorithms have been developed. There has been a considerable focus on fast algorithms that can be made purely combinatorial (i.e., not relying on a global convex relaxation of the problem)~\cite{veldt2023optimal,lee2025improvedalgorithmsoverlappingrobust,amburg2020clustering,CraneEtAl2024Overlapping}. Faster algorithms for \minecc{} can serve as fundamental primitives for the various tasks that this problem has been motivated by and applied to, which include community detection~\cite{amburg2020clustering}, team formation~\cite{amburg2022Diverse,crane2025edge}, and resource allocation~\cite{angel2016clustering}. 

\textbf{Related work.}
The earliest work on edge-colored clustering focused on graphs, and includes work on approximation algorithms for maximizing the number of satisfied edges~\cite{angel2016clustering}, and fixed-parameter tractability results~\cite{cai2018alternatingpathcolouredclustering}. 
Amburg et al.~\cite{amburg2020clustering} extended the study to hypergraphs. They developed an approximation algorithm with approximation factor $\min\left(2 - \frac{1}{k}, 2 - \frac{1}{r+1}\right)$, by rounding a linear programming (LP) relaxation, where $k$ is the number of colors and $r$ is the maximum hyperedge size. Additionally, they introduced combinatorial algorithms with approximation factors that scale linearly with $r$. Amburg et al.~\cite{amburg2022Diverse} later built upon this work by applying the framework to problems like diverse group discovery and team formation, expanding the scope of $\minecc$ applications. Veldt~\cite{veldt2023optimal} developed a $\min\left(2 - \frac{2}{k}, 2 - \frac{2}{r+1}\right)$-approximation algorithm based on LP rounding, and combinatorial 2-approximations with runtimes that are linear in terms of the hypergraph size. More recently, Crane et al.~\cite{CraneEtAl2024Overlapping} generalized \ECC{} to allow for a budgeted number of overlapping cluster assignments or node deletions, and among other results presented greedy combinatorial $r$-approximation algorithms. Then, Lee et al.~\cite{lee2025improvedalgorithmsoverlappingrobust} used the primal-dual method to develop combinatorial methods for the overlapping objectives, achieving better solution quality than the greedy algorithms and faster performance than the LP-rounding approach. 
Although much of this prior work has focused on algorithms that are fast and combinatorial, there is no known combinatorial algorithm with an approximation ratio better than $2$. Furthermore, among linear-time algorithms, there is a deterministic 2-approximation for the unweighted case  and a randomized 2-approximation for the weighted case, but no deterministic 2-approximation for the weighted case~\cite{veldt2023optimal}.

\textbf{Our contributions.}
Our main contribution is a $(2-2/k)$-approximation for \minecc{} that can be made completely combinatorial and has a theoretical runtime that is nearly linear in terms of the size of the hypergraph. This matches the best theoretical approximation algorithm of Veldt~\cite{veldt2023optimal} in cases where $r \geq k-1$, and comes with a faster runtime both in theory and practice. The key to this contribution is to show how to quickly and combinatorially solve a certain type of LP relaxation for $\minecc$ by solving a minimum $s$-$t$ cut problem in an auxiliary graph. As a secondary contribution, we develop a deterministic 2-approximation algorithm that has a linear runtime and applies to weighted hypergraphs. We accomplish this by combining core ideas from other linear-time \minecc{} algorithms~\cite{veldt2023optimal} with a fast 2-approximation for weighted minimum vertex cover~\cite{BarYehuda1981}.

\section{Preliminaries}
\label{sec:prelim}
An instance of $\minecc$ is given by an edge-colored hypergraph $H = (V, E, C, \ell)$ where $V$ is the node set, $E$ is the edge set, $C = [k] = \{1, 2, 3, \cdots, k\}$ is the set of edge colors, and $\ell: E \rightarrow C$ is a map assigning an edge to a color in $C$. Let $r$ denote the maximum hyperedge size, and $\mu = \sum_{e \in E} |e|$ denote the size of the hypergraph. Let $E_i$ be the set of edges with color $i$. Let $E(u)$ be the set of edges incident to node $u \in V$.
For weighted instances, $w(e) \geq 0$ represents a nonnegative weight for $e \in E$. Define $w(A) = \sum_{e \in A} w(e)$. In $\minecc$, the goal is to color nodes so that the color matches the color of its containing edges as much as possible. Formally, if all nodes in a hyperedge $e \in E$ are given the color $\ell(e)$, then the hyperedge is \textit{satisfied}; otherwise it is \textit{unsatisfied}. \minecc{} seeks to color nodes to minimize the total weight of unsatisfied edges. When there is a node $v$ contained in the intersection of edges $e, f \in E$ with $\ell(e) \ne \ell(f)$, we cannot color $v$ in a way that satisfies both edges. 
We call such a pair $(e, f)$ a bad edge pair and let the set of all bad edge pairs be denoted as $\mathcal{B}$. Let $C(u) = \{j \in C \colon \exists e \ni u \text{ with } j = \ell(e)\}$ be the list of colors incident to node $u$. The canonical linear programming relaxation for $\minecc$ ($\lpecc$), is defined as follows:
\begin{equation}
\label{eq:canonical}
    \begin{array}{llll}
        \min & {\sum_{e\in E}} w(e) z_e & &\\ 
        \text{s.t.} & \forall e \in E & z_e \geq x_u^{\ell(e)} & \forall u \in e \\
                    & \forall u \in V & \sum_{i \in C} x_u^i = k-1 & \\
                    & \forall u \in V & 0 \leq x_u^i \leq 1 & \forall i \in C\\
                    & \forall e \in E & 0 \leq z_e \leq 1. &
    \end{array}
    \tag{$\lpecc$}
\end{equation}
When variables $x_u^i, z_e$ are binary, the relaxation becomes a binary linear program that exactly captures the \minecc{} objective. The variable $x_u^i$ for a node $u$ and a color $i$ can be interpreted as a distance. If variables are binary, $x_u^i = 0$ indicates that $u$ is given color $i$, and the variable $z_e$ captures whether an edge $e$ is satisfied $(z_e = 0)$ or unsatisfied $(z_e = 1)$. We denote the optimal solution value for $\lpecc$ as $\opt{\lpecc}$, and use similar notation for other LPs. 

\textbf{Vertex-cover reduction.} 
Veldt~\cite{veldt2023optimal} showed $\minecc$ can be reduced to $\vc$ in an approximation preserving way as a means to developing fast $2$-approximations. To see this reduction it is helpful to see that \minecc{} can equivalently be viewed as an edge deletion task, where the goal is to delete a minimum weight set of edges (or number of edges in the unweighted case) so that all nodes belong to edges of at most one color in the remaining hypergraph. This in turn is equivalent to deleting a minimum weight set of edges so that no bad edge pairs remain, then we can color nodes according to the remaining edges. This is analogous to \vc{}, the task of identifying a minimum weight set of nodes to cover all edges.
In more detail, when reducing \minecc{} to $\vc$, each edge $e \in E$ in the hypergraph is associated with a node $v_e$ in a $\vc$ instance, and two nodes $(v_e, v_f)$ define an edge if $(e,f) \in \mathcal{B}$ (see Figure~\ref{fig:allfigures}). 
This reduction leads to an alternative type of LP relaxation $\lpvc$ for $\minecc$:
\begin{equation}
\label{eq:lpvc}
\begin{array}{lll}
\min  & \sum_{e \in E} w(e)x_e &\\
\text{s.t.} & x_e + x_f \geq 1 & \forall (e,f) \in \mathcal{B}\\
                &x_e \geq 0  & \forall e \in E. 
\end{array}
\tag{$\lpvc$}
\end{equation}
This LP amounts to reducing $\minecc$ to $\vc$ and then taking the standard LP relaxation for $\vc$.
The constraint $x_e + x_f \geq 1$ captures the fact that at least one edge in every bad edge pair must be deleted. A standard fact from prior work on $\vc$ is that this LP always has an optimal solution that is half-integral, meaning $x_e \in \{0,1, 1/2\}$ for every $e \in E$~\cite{nemhauser1975vertex}.

\section{A Deterministic 2-approximation Algorithm}
\label{sec:deterministic}
Veldt~\cite{veldt2023optimal} presented a randomized 2-approximation algorithm that runs in $O(\mu)$ time and applies even to weighted hypergraphs. This algorithm is obtained by implicitly applying the randomized 2-approximation for \vc{} by Pitt~\cite{pitt1985simple} without forming a reduced graph. In order to obtain a linear-time 2-approximation for \minecc{} that applies to weighted \minecc{} and is deterministic, we follow a similar approach but instead implicitly implement an existing deterministic algorithm for \vc{}. In particular, Algorithm~\ref{alg:localratiovc} is the deterministic 2-approximation algorithm for node-weighted \vc{} of Bar-Yehuda and Even~\cite{BarYehuda1981, BarYehuda1985}. It follows a greedy strategy that iterates through edges and locally lowers the weights of two adjacent nodes in the \vc{} instance by the minimum of their current weights. After iterating through all edges, the algorithm selects every node whose weight becomes zero. This algorithm can be implemented in linear time in terms of the number of edges in the \vc{} instance. However, if we explicitly reduce \minecc{} to a \vc{} instance and then apply Algorithm~\ref{alg:localratiovc}, this would take superlinear time in terms of the hypergraph size.

In Algorithm~\ref{alg:localratioecc}, we provide pseudocode for our algorithm \LRECC{}, which implicitly applies Algorithm~\ref{alg:localratiovc} to the \minecc{} instance $H = (V, E, C, \ell)$ without forming the reduced \vc{} instance. For this algorithm, we assume that the edges $E$ are sorted by color in increasing order. For each node $v \in V$, we examine its incident edges $E(v)$, where $v(1), v(2), \ldots, v(d_v)$ denote the indices of these edges and $d_v$ is the degree of $v$. Algorithm~\ref{alg:localratiovc} iterates through edges of the \vc{} instance, which would correspond to iterating through bad edge pairs in the \minecc{} instance. To iterate through all bad edge pairs, we visit each node $v \in V$ and consider pairs of edges that are in $E(v)$. 
We use two pointers, $f$ and $b$, initialized to $1$ and $d_v$, respectively. At each step, either $f$ is incremented or $b$ is decremented. When the pair $(e_{v(f)}, e_{v(b)})$ forms a bad edge pair, we reduce the weights of both edges by the minimum of their current weights, following the local-ratio update in Algorithm~\ref{alg:localratiovc}. As a result, at least one of the two edges ends up with weight zero. Since the algorithm eventually collects all edges whose weights become zero, which corresponds exactly to the set of edges to be deleted, we can immediately delete the edge whose weight becomes zero. This approach avoids explicitly iterating over all bad edge pairs by skipping pairs involving edges that have already been deleted. This guarantees a total running time of $O(\mu)$. Algorithm~\ref{alg:localratioecc} follows the same bad edge pair traversal strategy as Veldt~\cite{veldt2023optimal}. The difference is that when visiting a bad edge pair $(e_{v(f)}, e_{v(b)})$. it deterministically deletes at least one hyperedge by implicitly implementing \LRVC{}, rather than randomly deleting a hyperedge as in Pitt’s algorithm. We summarize with the following theorem.
\begin{theorem}
    Algorithm~\ref{alg:localratioecc} is a 2-approximation algorithm for \minecc{}. It runs in $O(\sum_{v \in V}d_v) = O(\mu)$ time.
\end{theorem}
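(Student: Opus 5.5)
The plan is to split the claim into two parts: correctness with the approximation factor, and the running time. For correctness, I will show that \LRECC{} is an implicit execution of \LRVC{} on the reduced \vc{} instance. The reduced instance has one vertex $v_e$ of weight $w(e)$ for each hyperedge $e$, and an edge $(v_e,v_f)$ for each $(e,f)\in\mathcal{B}$. It then suffices to check the two facts that the local-ratio analysis of Bar-Yehuda and Even needs.

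\textbf{(i) Feasibility.} When the algorithm ends, no bad edge pair has both edges undeleted. So the deleted set is a vertex cover of the reduced graph, and coloring each node by the unique color of its surviving incident edges satisfies every surviving edge.

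\textbf{(ii) Charging.} Every weight reduction is applied to a pair $(e,f)\in\mathcal{B}$. Each reduces $w(e)$ and $w(f)$ by the same amount $\varepsilon=\min(w(e),w(f))$, and the set output consists of edges whose residual weight is zero.

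The local-ratio argument then runs as usual. Write $w=\sum_t w_t$, where $w_t$ is the weight function placing $\varepsilon_t$ on the two endpoints of the $t$-th processed pair, plus a nonnegative residual $w_{\mathrm{res}}$. The output has residual weight $0$. For each $w_t$, any cover pays at least $\varepsilon_t$, while our output pays at most $2\varepsilon_t$. Summing over $t$ gives a cost of at most $2\,\opt{\vc}=2\,\opt{\minecc}$, using the approximation-preserving equivalence of the reduction. It is also important to note that skipping pairs in which one edge is already deleted (weight zero) is consistent with \LRVC{}, since such a pair would receive a reduction of $\min=0$.

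The main obstacle is showing that the two-pointer sweep at each node $v$ really leaves no bad pair of undeleted edges in $E(v)$. Fix $v$ and consider the sorted list $e_{v(1)},\dots,e_{v(d_v)}$ with pointers $f<b$. If $\ell(e_{v(f)})=\ell(e_{v(b)})$, then every edge between them has the same color, because the list is sorted. Hence all remaining undeleted edges share one color, and we can stop. Otherwise the pair is bad, and the update drives at least one of the two weights to zero. We delete that edge and advance the corresponding pointer; if both weights hit zero, we advance both. A deleted edge is also skipped whenever a pointer lands on it.

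The invariant is that every undeleted edge outside the window $[f,b]$ is either already deleted or has color equal to all undeleted edges inside the window at termination. I would establish it carefully: an edge left behind by pointer movement was deleted, so when the window closes, the undeleted edges of $E(v)$ are monochromatic. Because deletions are global, an edge deleted while processing some node stays deleted at all later nodes. Therefore, after all nodes are processed, no bad pair survives at any node, which is exactly feasibility (i).

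For the runtime, I assume the edges are pre-sorted by color; this can be done in $O(\mu)$ time by bucketing, and the order induced on each $E(v)$ is then also sorted. At node $v$, each step either moves a pointer or terminates, and deletion marks and weight updates cost $O(1)$. So node $v$ costs $O(d_v)$, and the total is $O(\sum_v d_v)=O(\mu)$.
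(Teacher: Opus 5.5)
Your proposal is correct and follows the paper's route: you treat \LRECC{} as an implicit run of \LRVC{} on the conflict graph, invoke the approximation-preserving reduction, and bound the work at each node $v$ by $O(d_v)$ because every iteration deletes an edge and advances a pointer. You also make explicit the local-ratio charging and the two-pointer argument that the paper leaves implicit. One small cleanup: your invariant as written is garbled, and it should read ``every edge of $E(v)$ outside the window $[f,b]$ lies in $D$,'' which holds because the pointers only ever move past deleted edges.
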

\begin{proof}
    Since Algorithm~\ref{alg:localratioecc} is an implicit implementation of \LRVC{} and the reduction from \minecc{} to \vc{} is an approximation preserving reduction, Algorithm~\ref{alg:localratioecc} is a 2-approximation. When it visits bad edge pairs sharing node $v$, it performs $O(d_v)$ work, since at least one hyperedge containing $v$ is deleted at each step. Therefore it runs in $O(\sum_{v \in V} d_v) = O(\mu)$ time overall. 
\end{proof}

\begin{algorithm}[t]
\caption{\LRVC{}}
\label{alg:localratiovc}
\begin{algorithmic}
    \State \textbf{Input:} A graph $G = (V, E)$
    \State \textbf{Output:} A vertex cover $C$
    \State $C \gets \emptyset$
    \For{$v \in V$} 
    \State $r(v) = w(v)$
    \EndFor
    \For{$(u,v) \in E$} 
    \State $M \gets \min\{r(u), r(v)\}$
    
    \State $r(u) \gets r(u) - M$
    \State $r(v) \gets r(v) - M$
    \EndFor
    \State $C \gets \{v \in V \mid r(v) = 0\}$
\end{algorithmic}
\end{algorithm}

\begin{algorithm}[H]
\caption{\LRECC{}}
\label{alg:localratioecc}
\begin{algorithmic}
 \State \textbf{Input:} A weighted instance $H = (V, E, C, \ell)$ 
\State \textbf{Output:} $D \subseteq E$ of edges to delete
\State $D \gets \emptyset$
\For{$v \in V$}
    \State $E(v) = [e_{v(1)} \hspace{7pt} e_{v(2)} \hspace{7pt} \cdots \hspace{7pt} e_{v(d_v)}]$
    \State $f=1, b=d_v$
    \While{$e_{v(b)} \in D$ and $b>f$}
        \State $b \gets b-1$
    \EndWhile

    \While{$e_{v(f)} \in D$ and $b>f$}
        \State $f \gets f+1$
    \EndWhile

    \While{$\ell(e_{v(f)}) \neq \ell(e_{v(b)})$}
        \If{$w(e_{v(f)}) < w(e_{v(b)})$}
            \State $D \gets D \cup \{e_{v(f)}\}$
            \State $w(e_{v(b)}) \gets w(e_{v(b)}) - w(e_{v(f)})$
            \While{$e_{v(f)} \in D$ and $b>f$}
                \State $f \gets f+1$
            \EndWhile
        \ElsIf{$w(e_{v(f)}) = w(e_{v(b)})$}
            \State $D \gets D \cup \{e_{v(f)}, e_{v(b)}\}$
            \While{$e_{v(f)} \in D$ and $b>f$}
                \State $f \gets f+1$
            \EndWhile
            \While{$e_{v(b)} \in D$ and $b>f$}
                \State $b \gets b-1$
            \EndWhile
        \Else
            \State $D \gets D \cup \{e_{v(b)}\}$
            \State $w(e_{v(f)}) \gets w(e_{v(f)}) - w(e_{v(b)})$
            \While{$e_{v(b)} \in D$ and $b>f$}
                \State $b \gets b-1$
            \EndWhile
        \EndIf
    \EndWhile
\EndFor
\State Return $D$
\end{algorithmic}
\end{algorithm}

\section{A Faster $(2-2/k)$-Approximation }
\label{sec:2-2/kalgorithm}
This section details our new $(2-2/k)$-approximation for \minecc{} that can be made completely combinatorial. The approximation guarantee will be obtained by reducing to an instance of $k$-colorable $\vc$ and then applying an approximation algorithm for that problem~\cite{hochbaum1983efficient}. This amounts to applying a rounding algorithm for $\lpvc$. The more important contribution is a new and more efficient approach to solve $\lpvc$ combinatorially without actually forming the $\vc$ instance or setting up the LP explicitly.

\subsection{The LP rounding algorithm}
\label{subsec:rounding}
The reduction from $\minecc$ to $\vc$ generates a naturally $k$-colorable graph, by associating hyperedge colors with colors for nodes in the reduced $\vc$ instance. This is because by design, hyperedges with the same color cannot define bad edge pairs. This observation is beneficial since Hochbaum~\cite{hochbaum1983efficient} previously showed how to round the LP relaxation for $\vc$ into a $(2 - 2/k)$-approximate solution for $k$-colorable graphs.
Instead of explicitly forming a $\vc$ instance when reducing from \minecc{}, we can apply the approach of Hochbaum (which amounts to Algorithm~\ref{alg:rounding}) to directly round a half-integral solution to $\lpvc$.
\begin{algorithm}[t]
\caption{Approximate LP rounding algorithm}
\label{alg:rounding}
\begin{algorithmic}
    \State \textbf{Input:} An optimal half-integral solution $X = \{x_e \mid e \in E\}$ to $\lpvc$ 
    \State \textbf{Output:} $Y \subseteq E$ of edges to delete
    \State $E^1 = \{e \in E \colon x_e = 1 \}$
    \State $E^{1/2} = \{e \in E \colon x_e = 1/2 \}$
    \State $E^0 = \{e \in E \colon x_e = 0 \}$
    \State $i^* \gets \text{argmax}_{i\in C}(w(E_i \cap E^{1/2}))$
    \State \Return $Y:= E^1 \cup (E^{1/2} - E_{i^*}\cap E^{1/2})$\;
\end{algorithmic}
\end{algorithm}
\begin{theorem}
    Algorithm~\ref{alg:rounding} is a $(2-\frac2k)$-approximation for \minecc{}.
\end{theorem}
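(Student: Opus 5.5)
Two things need checking for Algorithm~\ref{alg:rounding}: that $Y$ is feasible, meaning no bad edge pair survives deletion, and that its weight is at most $(2-\frac2k)\opt{\lpvc}$. For the approximation factor we also use $\opt{\lpvc} \le \opt{\minecc}$, which holds because every binary \minecc{} solution (delete a set of edges so that no bad pair remains) is feasible for $\lpvc$.

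\textbf{Feasibility.} Take any $(e,f)\in\mathcal{B}$, so $\ell(e)\neq\ell(f)$ and $x_e+x_f\ge 1$. By half-integrality, either one of the two values equals $1$, or both equal $1/2$. In the first case that edge lies in $E^1\subseteq Y$. In the second case $e,f\in E^{1/2}$, and since their colors differ, at most one of them has color $i^*$. The other is in $E^{1/2}-E_{i^*}\cap E^{1/2}\subseteq Y$. Hence every bad pair loses at least one edge. The remaining edges then contain no bad pair, so every node sees at most one color and can be colored to satisfy all kept edges. The \minecc{} cost is therefore at most $w(Y)$.

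\textbf{Cost bound.} Write $W_{1/2}=w(E^{1/2})=\sum_{i\in C} w(E_i\cap E^{1/2})$. Since $i^*$ maximizes $w(E_i\cap E^{1/2})$, averaging over the $k$ colors gives $w(E_{i^*}\cap E^{1/2})\ge W_{1/2}/k$. Therefore
\[
w(Y) = w(E^1) + W_{1/2} - w(E_{i^*}\cap E^{1/2}) \le w(E^1) + \left(1-\tfrac1k\right)W_{1/2}.
\]
The LP value is $\opt{\lpvc}=w(E^1)+\tfrac12 W_{1/2}$. Because $2-\frac2k\ge 1$ for $k\ge 2$, we get $w(E^1)\le (2-\frac2k)w(E^1)$. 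Also $(1-\frac1k)W_{1/2}=(2-\frac2k)\cdot\frac12 W_{1/2}$. Combining these,
\[
w(Y)\le \left(2-\tfrac2k\right)\opt{\lpvc}\le\left(2-\tfrac2k\right)\opt{\minecc}.
\]

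The only delicate point is the feasibility case where both $x_e$ and $x_f$ equal $1/2$. It relies on the reduced \vc{} graph being properly $k$-colored by hyperedge color: bad pairs always have distinct colors, so removing a single color class from $E^{1/2}$ never leaves a pair uncovered. The case $k=1$ is trivial, since then $\mathcal{B}=\emptyset$ and the optimal LP solution together with $Y$ has cost $0$.
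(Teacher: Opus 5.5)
Your proof is correct and follows essentially the same route as the paper: the same weight computation $w(Y)\le w(E^1)+(1-\frac1k)w(E^{1/2})\le(2-\frac2k)\opt{\lpvc}\le(2-\frac2k)\opt{\minecc}$, and the same feasibility argument. Your case analysis on bad pairs spells out one step the paper leaves implicit, namely that two edges of color $i^*$ never form a bad pair, and you also handle the trivial case $k=1$, which the paper sets aside by assuming $k\ge 2$.
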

\begin{proof}
Given a hypergraph $H = (V, E, C, \ell)$ and an optimal half-integral solution $\{x_e^* \mid e\in E\}$ to $\lpvc$, first, let us show that $\opt{\lpvc} := \sum_{e \in E} w(e)x_e^* = w(E^1) + \frac{1}{2} w(E^{1/2})$.

\begin{align*}
    \sum_{e \in E} w(e)x_e^* &= \sum_{e \in E^1} w(e)x_e^* + \sum_{e \in E^{1/2}} w(e)x_e^* + \sum_{e \in E^0} w(e)x_e^* \\
    &= \sum_{e \in E^1} w(e) \cdot 1 + \sum_{e \in E^{1/2}} w(e) \cdot \frac{1}{2} + \sum_{e \in E^0} w(e) \cdot 0\\
    &= w(E^1) + \frac{1}{2} w(E^{1/2}).
\end{align*}
\\   
Second, from the definition of $i^*$ we know that
\[
w(E_{i^*} \cap E^{1/2}) \geq \frac{1}{k} w(E^{1/2}).
\]
Next, note that if $(e,f)$ is a bad edge pair where $e \in E^0$, then $f$ must be in $E^1$ in order to satisfy $x_e + x_f \geq 1$. Therefore, $E^0 \cup (E_{i^*} \cap E^{1/2})$ is a set of satisfiable edges. Now, consider its complement $E^1 \cup (E^{1/2} - E_{i^*} \cap E^{1/2})$. Then, its weight (assuming $k \geq 2$) is
\begin{align*}
    w(E^1 \cup (E^{1/2} - E_{i^*} \cap E^{1/2})) &= w(E^1) + w(E^{1/2}) - w(E_{i^*} \cap E^{1/2})\\
    &\leq w(E^1) + w(E^{1/2}) - \frac{1}{k} w(E^{1/2})\\
    &\leq \left(2 - \frac{2}{k}\right) \left(w(E^1) + \frac{1}{2} w(E^{1/2})\right) \\
    &= \left(2 - \frac{2}{k}\right) \opt{\lpvc}\\ 
    &\leq \left(2 - \frac{2}{k}\right) \opt{\minecc}, 
\end{align*}
where $\opt{\minecc}$ is the sum of weights for an optimal solution to $\minecc$ for $H$.
\end{proof}

Hochbaum~\cite{hochbaum1983efficient} showed that the $\vc$ LP relaxation can be solved by reducing it to a minimum $s$-$t$ cut in an auxiliary graph, where the number of edges in the auxiliary graph is roughly twice the number of edges in the \vc{} instance. Applying this approach to a \minecc{} instance $H = (V,E,C, \ell)$ would result in an $s$-$t$ cut problem in an auxiliary graph with $2|E|+2$ nodes and $2|E| + 2|\mathcal{B}| = O(|E|^2)$ edges. The resulting minimum $s$-$t$ cut problem can then in theory be solved in $\tilde{O}(|E|^2)$-time using the algorithm of van den Brand et al.~\cite{van2021minimum}. This is already faster than the best approximation obtained by rounding $\lpecc$~\cite{veldt2023optimal}, which gives the same $(2-2/k)$-approximation when $r \geq k-1$. The best theoretical runtime for solving an LP of the form $\min_{Ax=b;x \geq 0}c^Tx$ with $N$ variables is $\Omega(N^\omega)$-time~\cite{Jiang2021Faster, cohen2020solvinglinearprogramscurrent}. When we convert $\lpecc$ to this form, the number of variables is $N = |E| + k|V| + \mu$. Since $\omega \geq 2$, this results in a runtime of $\Omega((k|V| + \mu)^2)$. 

\textbf{Key challenge.} Although faster in theory, reducing explicitly to \vc{} and applying the results of Hochbaum~\cite{hochbaum1983efficient}
is impractical because the reduced \vc{} instance becomes very dense. For many real \minecc{} datasets (see Table~\ref{tab_experiments}), $|\mathcal{B}|$ is extremely large. Even \emph{finding} $\mathcal{B}$ explicitly is expensive, and takes superlinear time in terms of the hypergraph size even if $r$ and $k$ are constants.
Meanwhile, although the theoretical runtime for $\lpecc$ is bad, it often has far fewer constraints than $\lpvc$ in practice (see Table~\ref{tab_experiments}), and it has been shown to be easily solved in practice for many real-world instances~\cite{amburg2020clustering}.
Our main contribution is a new way to solve $\lpvc$ that completely avoids dealing with bad edge pairs, resulting in better runtimes both in theory and practice.

\subsection{Color Pair Binary Linear Program}
To avoid the challenge above, we present a binary linear program that is equivalent to $\lpvc$ but has far fewer constraints and can be reduced to a much sparser minimum $s$-$t$ cut problem. The idea behind this binary linear program comes from blending concepts from $\lpecc$ and $\lpvc$. We first define a new linear program---the color-pair LP ($\lpcp$)---that merges constraints from $\lpecc$ and $\lpvc$:
\begin{equation}\label{eq:lpcp}
    \begin{array}{lll}
        \min & {\displaystyle \sum_{e\in E}} w(e) z_e & \\ 
        \text{s.t.} & \forall e \in E, \quad z_e \geq x_u^{\ell(e)} & \forall u \in e\\
        &\forall u \in V, \quad x_u^i + x_u^j \geq 1 & \forall (i, j) \in {C(u) \choose 2}\\[3pt]
        & \forall u \in V, \quad 0 \leq x_u^i \leq 1 & \forall i \in C(u)\\
        & \forall e \in E, \quad 0 \leq z_e \leq 1.&
    \end{array}
\tag{2}
\end{equation}
In this relaxation, the constraint $(x_u^i + x_u^j \geq 1)$ captures the fact that a node $u$ cannot have both color $i$ and color $j$. This constraint resembles the bad edge pair constraint $x_e + x_f \geq 1$ from $\lpvc$, but we will not need $\Omega(|\mathcal{B}|)$ constraints of this form. This constraint relaxes the constraint $\sum_{c \in C} x_u^c = k-1$ from $\lpecc$. This leads to a looser relaxation. $\lpcp$ can be shown to be equivalent to the following color-pair binary linear program ($\ipcp$):
\begin{equation*}
    \begin{array}{lll}
        \min &\displaystyle\frac{1}{2} \sum_{e\in E}  w(e) (b_e - a_e + 1) \\ 
        \text{s.t.} & \forall e \in E: \quad  a_e \leq a_u^{\ell(e)} & \forall u \in e\\
        & \forall e \in E: \quad b_u^{\ell(e)} \leq b_e  & \forall u \in e\\
        &\forall u \in V: \quad a_u^i \leq b_u^j & \forall (i, j) \in {C(u) \choose 2}\\[3pt]
        &\forall u \in V: \quad a_u^j \leq b_u^i & \forall (i, j) \in {C(u) \choose 2}\\
        &\forall u \in V: \quad a_u^i, b_u^i \in \{0, 1\} & \forall i \in C(u)\\
        &\forall e \in E: \quad a_e, b_e \in \{0, 1\}.
    \end{array}
\tag{$\ipcp$}
\end{equation*}
We prove that this BLP is equivalent to $\lpvc$, which is perhaps surprising given that it does not include any explicit reference to bad edge pairs. We remark that it is not strictly necessary to consider or prove anything about $\lpcp$ in order to see that $\ipcp$ is equivalent to $\lpvc$. However, it serves as a helpful conceptual stepping stone. In particular, it is relatively easy to see that $\lpcp$ is a slight relaxation of $\lpecc$. It differs in that it always permits a half-integral solution, and is equivalent with $\ipcp$ and $\lpvc$. Rather than proving all of these equivalences, it suffices for our analysis to prove directly that $\ipcp$ is equivalent to $\lpvc$ and can be solved via reduction to a minimum $s$-$t$ cut problem.

\begin{theorem}
\label{thm:ipcplpvc}
Let $\{a_e, b_e,a_u^i, b_u^i \mid e\in E, u \in V, i \in C\}$ be a set of optimal variables for $\ipcp$ and for $e \in E$ define:
\begin{equation}
\label{eq:xe}
    x_e = \frac{1}{2}(b_e -a_e + 1).
\end{equation}
Then, $\{x_e\}$ is an optimal set of variables for $\lpvc$. 
\end{theorem}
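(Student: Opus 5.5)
The plan is to prove two inequalities between optimal values and then note that the map in \eqref{eq:xe} sends an optimal $\ipcp$ solution to a feasible $\lpvc$ solution of the same cost. Write $\opt{\ipcp}$ for the optimum of the objective $\frac12\sum_e w(e)(b_e-a_e+1)$.

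\textbf{Step 1: feasibility and $\opt{\lpvc}\le\opt{\ipcp}$.} Take any feasible $\ipcp$ solution and define $x_e$ by \eqref{eq:xe}.
\begin{itemize}
\item \emph{Range of $x_e$.} Since $a_e,b_e\in\{0,1\}$, we have $x_e\in\{0,\tfrac12,1\}$, so $x_e\ge 0$.
\item \emph{Bad pairs.} Let $(e,f)\in\mathcal B$ share a node $u$, with colors $i=\ell(e)\ne j=\ell(f)$, so $i,j\in C(u)$. The constraints give the chain
\[
a_e \le a_u^{i} \le b_u^{j} \le b_f .
\]
Symmetrically, $a_f\le a_u^j\le b_u^i\le b_e$.
\item \emph{Conclusion.} Adding, $b_e+b_f-a_e-a_f\ge 0$, which is exactly $x_e+x_f\ge 1$.
\end{itemize}
So $x$ is feasible for $\lpvc$ with cost equal to the $\ipcp$ objective, giving $\opt{\lpvc}\le \opt{\ipcp}$.

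\textbf{Step 2: the reverse inequality $\opt{\ipcp}\le\opt{\lpvc}$.} Use the half-integrality fact cited earlier: $\lpvc$ has an optimal solution with $x_e\in\{0,\tfrac12,1\}$. Encode it as follows.
\begin{itemize}
\item \emph{Edge variables.} Set $(a_e,b_e)=(1,1)$ if $x_e=0$, $(0,1)$ if $x_e=\tfrac12$, and $(0,0)$ if $x_e=1$. Then $\tfrac12(b_e-a_e+1)=x_e$, so costs match.
\item \emph{Node variables.} Set $a_u^i=\min_{e\ni u,\ell(e)=i} a_e$ and $b_u^i=\max_{e\ni u,\ell(e)=i} b_e$. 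These are the tightest choices, and they satisfy the first two constraint families automatically.
\item \emph{Remaining constraint.} It must be shown that $a_u^i\le b_u^j$ for $i\ne j\in C(u)$. The only possible violation is $a_u^i=1$ and $b_u^j=0$. Then every color-$i$ edge at $u$ has $x=0$, and every color-$j$ edge at $u$ has $x=1$. Any such pair is bad with $x_e+x_f=1$, so no violation arises. In fact $a_u^i=1$ alone forces every differently colored edge $f$ at $u$ to have $x_f=1$, since $x_e=0$. Hence $b_u^j=0$ is consistent, and the constraint holds.
\end{itemize}

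I expect Step 2, verifying the color-pair constraints, to be the main obstacle. One worry is that $\ipcp$ imposes these constraints on all of $\binom{C(u)}{2}$, not only on pairs coming from bad edge pairs. Another is that the min/max aggregation over a color class might be too coarse. The check above shows both concerns are harmless: a violation would require an $x=0$ edge adjacent at $u$ to an $x=0$ or $x=\tfrac12$ edge of a different color, which $\lpvc$ forbids.

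\textbf{Conclusion.} The two steps give $\opt{\ipcp}=\opt{\lpvc}$. For an optimal $\ipcp$ solution, the $x$ of Step 1 is feasible for $\lpvc$ and attains cost $\opt{\ipcp}=\opt{\lpvc}$, so it is optimal for $\lpvc$.
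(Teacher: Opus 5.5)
Your Step 1 is correct and coincides with the paper's feasibility argument, but Step 2 fails as written, for three concrete reasons.

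\textbf{The edge encoding.} Your encoding does not satisfy $x_e=\tfrac12(b_e-a_e+1)$. The pair $(1,1)$ gives $\tfrac12$ rather than $0$, $(0,1)$ gives $1$ rather than $\tfrac12$, and $(0,0)$ gives $\tfrac12$ rather than $1$. What you wrote is really $x_e=1-\tfrac12(a_e+b_e)$, so the claimed cost equality is false.

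\textbf{The aggregation direction.} The constraints $a_e\le a_u^{\ell(e)}$ and $b_u^{\ell(e)}\le b_e$ force $a_u^i\ge\max a_e$ and $b_u^i\le\min b_e$ over the color-$i$ edges containing $u$. Choosing $\min$ for $a$ and $\max$ for $b$ therefore violates one of these constraints as soon as $u$ lies in two color-$i$ edges with different encodings.

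\textbf{The color-pair check.} You correctly isolate the configuration $a_u^i=1$, $b_u^j=0$. You then argue it is harmless because the corresponding edges are $\lpvc$-feasible, but that configuration \emph{is} a violation of $a_u^i\le b_u^j$. Concretely, take a single node $u$ in edges $e$ (color $1$, $x_e=0$) and $f$ (color $2$, $x_f=1$). This is an optimal $\lpvc$ solution when $w(f)\le w(e)$, yet your construction gives $a_u^1=1>0=b_u^2$.

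The missing idea is that the encoding must be chosen so that any violated color-pair constraint forces an $\lpvc$-\emph{infeasible} bad pair. The paper uses $(a_e,b_e)=(1,0),(1,1),(0,1)$ for $x_e=0,\tfrac12,1$, which does satisfy the formula. It pairs this with $a_u^i=\max_{e\in E_i,\,u\in e}a_e$ and $b_u^i=\min_{e\in E_i,\,u\in e}b_e$, so the first two constraint families hold by construction. Now suppose $a_u^i=1$ and $b_u^j=0$. Then there is a color-$i$ edge $e\ni u$ with $a_e=1$, hence $x_e\le\tfrac12$, and a color-$j$ edge $f\ni u$ with $b_f=0$, hence $x_f=0$. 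The bad pair $(e,f)$ then has $x_e+x_f\le\tfrac12<1$, a contradiction. The family $a_u^j\le b_u^i$ is symmetric.
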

\begin{proof}
We first show that $\{x_e\}$ variables as defined in Eq.~\eqref{eq:xe} are feasible for $\lpvc$.
If $(e, f)$ is a bad edge pair, there exists some $u\in e \cap f$ such that the following inequalities hold:
\begin{align*}
    a_e \leq a_u^{\ell(e)} \leq b_u^{\ell(f)} \leq b_f \quad \text{and} \quad a_f \leq a_u^{\ell(f)} \leq b_u^{\ell(e)} \leq b_e.
\end{align*}
This implies $x_e + x_f = \frac{1}{2}(b_e+b_f-a_e-a_f+2) \geq 1$. Hence, we have a feasible solution for $\lpvc$ with the same objective value, so $\opt{\ipcp} \geq \opt{\lpvc}$. 

Now, we show $\opt{\ipcp} \leq \opt{\lpvc}$. Let $\{x_e\}$ be an optimal solution to $\lpvc$, and define $a_e, b_e$ as follows:
\[
(a_e, b_e) := 
\begin{cases}
(1, 0) & \text{if } x_e = 0\\
(0, 1) & \text{if } x_e = 1\\
(1, 1) & \text{if } x_e = 1/2.
\end{cases}
\]
Notice that $x_e = (b_e - a_e + 1)/2$, meaning the objective functions of $\lpvc$ and $\ipcp$ are equal. We also define $a_u^i$ and $b_u^i$ as follows:
\begin{equation*}
     a_u^i := \max_{e \in E_i, u \in e} a_e  \quad \text{and} \quad  b_u^i := \min_{e \in E_i, u \in e} b_e.
\end{equation*}
This implies $a_e \leq a_u^{\ell(e)}$ and $b_u^{\ell(e)} \leq b_e$ for all $u \in e$. Now, suppose $a_u^i > b_u^j$ for $i \neq j$. Then, $a_u^i = 1$ and $b_u^j = 0$. This tells us that there exists some $e \in E_i$ such that $u \in e$ and $a_e = 1$. By the definition of $a_e$, having $a_e = 1$ means $x_e \leq \frac{1}{2}$. Similarly, there exists some $f \in E_j$ such that $u \in f$ and $b_f = 0$. By the definition of $b_e$, having $b_f = 0$ means $x_f = 0$. Since $u \in e \cap f$ and $i \ne j$, $(e, f)$ is a bad edge pair, but $x_e + x_f \leq \frac{1}{2}$ which contradicts with the constraint $x_e + x_f \geq 1$ in $\lpvc$. Thus, $a_u^i \leq b_u^j$ is true. Using an analogous argument, we can also prove that the constraint $a_u^j \leq b_u^i$ also holds.
\end{proof}
In $\ipcp$, the number of constraints is $2\sum_{u \in V} |C(u)| + 2\mu + 2|E| + 2\sum_{u \in V}{|C(u)| \choose 2} =O(\mu + k^2|V|)$. 
This does not depend on $|\mathcal{B}|$ and enables us to avoid the issue discussed in Section~\ref{subsec:rounding}. If $k$ is constant, this value is linear in terms of the hypergraph size $\mu$.
\begin{figure}
    \centering
    \includegraphics[width = .8\linewidth]{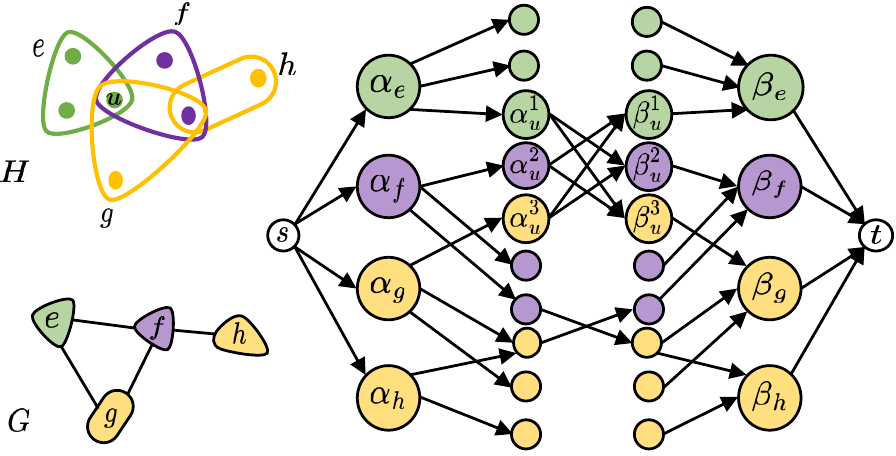}
    \caption{In the top left we show a \minecc{} instance ${H}$, which can be reduced to the $k$-colorable \vc{} instance $G$ in the bottom left. On the right, we show the reduced flow network whose minimum $s$-$t$ cut coincides with the optimal solution to $\ipcp$ (see Section~\ref{subsec:reductionalgorithm}).}
    \label{fig:allfigures}
\end{figure}

\subsection{Reduction to Flow Network}
\label{subsec:reductionalgorithm}
In addition to having far fewer constraints than $\lpvc$, $\ipcp$ can be reduced to a new minimum $s$-$t$ cut problem with fewer nodes and edges than the graph obtained by explicitly reducing to \vc{} and applying the results of Hochbaum~\cite{hochbaum1983efficient}. 
Formally, we use the following flow network construction for an instance $H = (V,E,C,\ell)$:
\begin{enumerate}[leftmargin = 20pt]
    \item Introduce source and sink nodes $s$ and $t$, and initialize empty node sets $A_E, B_E, A_V,$ and $B_V$.
    \item For $e \in E$, add nodes $\alpha_e \in A_E$ and $\beta_e \in B_E$.
    \item For $u \in V$ and $i \in C(u)$, add nodes $\alpha_u^i \in A_V$ and $\beta_u^i \in B_V$.
    \item For $u \in V$ and $(i, j) \in {C(u) \choose 2}$, add directed edges $(\alpha_u^i, \beta_u^j)$ and $(\alpha_u^j, \beta_u^i)$ with weight $\infty$.
    \item For $e \in E$ and each $u \in e$, introduce directed edges $(\alpha_e, \alpha_u^{\ell(e)})$ and $(\beta_u^{\ell(e)}, \beta_e)$ with weight $\infty$.
    \item For each $e \in E$, introduce directed edges $(s,\alpha_e)$ and $(\beta_e,t)$ with weight $\frac{1}{2} w(e)$.
\end{enumerate}

\newcommand{\detours}{\LRECC{}}
\newcommand{\ours}{\textsf{ColorPair-Flow}}
\newcommand{\hoch}{\textsf{VC-Flow}}
\newcommand{\old}{$\lpecc$-\textsf{round}}

\begin{table*}[t]
\caption{Statistics for six benchmark \minecc{} datasets.}
\label{tab_experiments}
\begin{center}
\scalebox{0.85}{
    \begin{tabular}{ccccccccccc}
    \toprule
    & \multicolumn{7}{c}{\textbf{Hypergraph Statistics}}
    & \multicolumn{3}{c}{\textbf{Constraints}}\\ 
    \cmidrule(lr){2-8}
    \cmidrule(lr){9-11}\
     \emph{Dataset}  & {\small	$|V|$} & {\small	$|E|$} & {\small	$k$} & {\small	$r$} & {\small	$\mu$} & {\small $|\mathcal{B}|$} & {\small	$(2-2/k)$} & {\small	$\lpecc$} & {\small	$\lpvc$} & {\small	$\lpcp$}\\ 
    \midrule
    {\small 	\texttt{Brain}} & 638 & 21,180 & 2 & 2 & 42,360 & 490,163 & 1 & 42,998 & 490,163 & 42,998\\
    {\small 	\texttt{MAG-10}} & 80,198 & 51,889 & 10 & 25 & 180,726 & 473,780 & 1.8 & 260,924 & 473,780 & 910,795\\
    {\small 	\texttt{Cooking}} & 6,714 & 39,774 & 20 & 65 & 428,275 & 312,098,857 & 1.9 & 434,989 & 312,098,857 & 657,292 \\
    {\small 	\texttt{DAWN}} & 2,109 & 87,104 & 10 & 22 & 343,211 & 340,602,559 & 1.8 & 345,320 & 340,602,559 & 374,775 \\
    {\small 	\texttt{Walmart}}  &   88,837 & 65,898 & 44 & 25 & 452,208 & 27,283,314 & 1.955 & 541,045 & 27,283,314 & 4,680,597 \\
    {\small 	\texttt{Trivago}} &    207,974 & 247,362 & 55 & 85 & 757,946 & 2,449,951 & 1.964 & 965,920 & 2,449,951 & 12,210,823 \\
    \bottomrule
    \end{tabular}
}
\end{center}
\end{table*} 

\begin{table*}[t]
\caption{Comparisons for three algorithms. The ratio columns indicate the objective score divided by the LP lower bound for each algorithm. We display the average and standard deviation over 5 runs.}
\label{tab_experiments2}
\begin{center}
\scalebox{0.85}{
    \begin{tabular}{cccccccccc}
    \toprule
    & \multicolumn{3}{c}{\old{}}
    & \multicolumn{3}{c}{\hoch{}}
    & \multicolumn{3}{c}{\ours}\\ 
    \cmidrule(lr){2-4} 
    \cmidrule(lr){5-7} 
    \cmidrule(lr){8-10} \
     \emph{Dataset}  & {\small    ratio} & {\small    runtime (s)} & {\small    memory} & {\small  ratio} & {\small  runtime (s)} & {\small    memory} & {\small  ratio} & {\small  runtime (s)} & {\small    memory}\\ 
    \midrule
    {\small 	\texttt{Brain}} & 1 & 0.191$\pm$0.009 & 54MB & 1.002 & 5.443$\pm$0.196 & 1.1GB & 1 & 0.526$\pm$0.179 & 209MB\\
    {\small 	\texttt{MAG-10}} & 1 & 5.383$\pm$0.378 & 861MB & 1.193 & 12.413$\pm$0.163 & 1.2GB & 1.193 & 4.454$\pm$0.142 & 821MB \\
    {\small 	\texttt{Cooking}} & 1 & 18.465$\pm$1.647 & 540MB & N/A & timed out & - & 1.605 & 1.599$\pm$0.038 & 1.1GB \\
    {\small 	\texttt{DAWN}} & 1 & 2.888$\pm$0.121 & 398MB & N/A & timed out & - & 1 & 5.582$\pm$0.177 & 1.2GB \\
    {\small 	\texttt{Walmart}} & 1 & 631.559$\pm$4.443 & 3.5GB & N/A & timed out & - & 1.654 & 6.794$\pm$0.042 & 2.4GB\\
    {\small 	\texttt{Trivago}} & 1.001 & 51.124$\pm$7.399 & 9.5GB & 1.272 & 299.06$\pm$2.25 & 6.1GB & 1.272 & 54.646$\pm$0.159 & 4.1GB\\
    \bottomrule
    \end{tabular}
}
\end{center}
\end{table*} 

After forming this graph, we solve the maximum $s$-$t$ flow problem over it, and use it to compute a minimum $s$-$t$ cut using standard techniques. This partitions nodes into two sets, $S$ and $\bar{S}$. Each node in the graph corresponds to one binary variable in $\ipcp$: node $\alpha_e$ corresponds to variable $a_e$, node $\beta_e$ corresponds to $b_e$, and similarly we have the mapping $\alpha_u^i \leftrightarrow a_u^i$, and $\beta_u^i \leftrightarrow b_u^i$.
A variable is set to 1 if its corresponding node is in $S$, otherwise the variable is 0.

Now, we explain how the construction exactly captures constraints in $\ipcp$. If there is an $\infty$-weight directed edge from $\alpha$ to $\beta$, then $\beta \in \bar{S} \implies \alpha \in \bar{S}$, and $\alpha \in S \implies \beta \in S$, to avoid cutting and $\infty$-weight edge. If $\alpha$ and $\beta$ are associated with variables $a$ and $b$, this is equivalent to a constraint of the form $a \leq b$, which enforces that $a = 1 \implies b = 1$ and $b = 0 \implies a = 0$.

Observe finally that the minimum cut value is exactly equal to $\opt{\ipcp}$. Consider an edge $e \in E$. If $\alpha_e \in S, \beta_e \in \bar{S}$, then neither $(s, \alpha_e)$ nor $(\beta_e, t)$ is cut.
If $\alpha_e \in \bar{S}, \beta_e \in S$, then both $(s, \alpha_e)$ and $(\beta_e, t)$ are cut. This contributes $2 \times \frac{1}{2}w(e)=w(e)$ to the cut value. If $\alpha_e, \beta_e \in \bar{S}$ or $\alpha_e, \beta_e \in S$, then either $(s, \alpha_e)$ or $(\beta_e, t)$ is cut, respectively. In either case, this contributes $\frac{1}{2}w(e)$ to the cut. We obtain the same contributions to the objective of $\ipcp$ when we consider what the above cases imply about the variables $a_e$ and $b_e$.

\textbf{Improved runtime.}
We now analyze the runtime of solving the new minimum $s$-$t$ cut problem. The number of nodes in the reduced flow network is
$N = 2 + 2|E| + 2 \sum_{u \in V} |C(u)| \leq 2 + 2|E| + 2k|V|$
and the number of edges is
\[
M = 2|E| + 2\sum_{e \in E} |e| + 2 \sum_{u \in V} {|C(u)| \choose 2} \leq 2|E| + 2\mu + (k^2 - k)|V|.
\] 
Given a directed graph with $N$ nodes and $M$ edges, the algorithm of van den Brand et al.~\cite{van2021minimum} finds a minimum $s$-$t$ cut in $\tilde{O}(M+N^{1.5})$ time. Chen et al.~\cite{chen2022maximum} solves the problem in $M^{1+o(1)}$ time. From this we know that $\ipcp$ can be solved in $(\mu + k^2|V|)^{1+o(1)}$ time or { $\tilde{O}(\mu + k^2|V| + (|E| + k|V|)^{1.5})$} time. This is significantly faster than the runtime obtained by reducing explicitly to \vc{} and applying the approach of Hochbaum~\cite{hochbaum1983efficient} (see Section~\ref{subsec:rounding}). In particular, if $k$ is a constant, our best theoretical runtime is nearly linear in terms of the hypergraph size.

\section{Experimental Results}
\label{sec:Implementation}
We call our new $(2-2/k)$-approximation for \minecc{} \ours{}, since it amounts to solving a maximum $s$-$t$ flow problem derived from the color-pair BLP. This algorithm uses Theorem~\ref{thm:ipcplpvc} to obtain a solution to $\lpvc$, and then uses Algorithm~\ref{alg:rounding} to delete certain edges so that nodes can be colored in a way that leaves all remaining edges satisfied. We implement this in Julia and run experiments on a standard suite of benchmark \minecc{} datasets. See Table~\ref{tab_experiments} for datasets statistics and prior work for more information about these hypergraphs~\cite{amburg2020clustering,veldt2023optimal}. We run experiments on a research server with 1TB of RAM. Code is available at~\url{https://github.com/shan-mcs/minecc-combinatorial/}. We use a Julia implementation of the push-relabel algorithm for solving maximum flows, originally used in the work of Veldt et al.~\cite{veldt2019flow} and improved by Huang et al.~\cite{huang2024densest}. The results in Table~\ref{tab_experiments2} show that this method typically finds solutions that are much better than the worst case guarantee of $(2-2/k)$, and sometimes even finds optimal solutions.

For comparison we implement the minimum $s$-$t$ cut construction that follows from the work of Hochbaum~\cite{hochbaum1983efficient} for \vc{} (\hoch; see Section~\ref{sec:2-2/kalgorithm}). This approach faces significant memory and runtime issues, since it requires explicit enumeration and storage of all bad edge pairs $\mathcal{B}$, which can be extremely large. Not surprisingly, this method typically runs out of time (we set an upper bound of 15 minutes). We also attempted to directly solve $\lpvc$ using Gurobi optimization software, but this ran into memory and runtime issues for the same reason. This comparison highlights the benefits of our approach. $\lpvc$ is a useful LP relaxation to consider, as it has half-integral solutions and is particularly amenable to combinatorial approaches. However, direct approaches for solving it are completely infeasible since $|\mathcal{B}|$ can be huge even for medium sized hypergraphs. We stress that the poor performance of this approach for \minecc{} is not due in any way to limitations in the work of Hochbaum~\cite{hochbaum1983efficient}, which focuses on \vc{}. Rather, the poor performance is due to issues in explicitly reducing \minecc{} to \vc{}. Our approach sidesteps this issue and solves the LP without ever enumerating $\mathcal{B}$. 

We also solve $\lpecc$ using Gurobi and round it via a standard approach: node $u$ is given color $c = \arg\max_i x_u^i$~\cite{amburg2020clustering,veldt2023optimal}. This produces better quality solutions (since $\lpecc$ provides a tighter lower bound), and in some cases is even slightly faster than \ours. However, this approach can have unusually long runtimes. 
On Walmart, \old{} takes over 10 minutes while \ours{} takes under 10 seconds. It is not immediately clear which hypergraph characteristics lead to particularly long runtimes. In particular, Walmart does not have the highest value for $|V|$, $|E|$, $r$, $k$, $\mu$, or $|\mathcal{B}|$. In contrast, \ours{} provides a favorable runtime and quality tradeoff, as it produces good solutions with much more predictable and consistently fast runtimes. We also tracked the memory allocated by each method. \ours{} required comparatively less memory on the large datasets. For Walmart and Trivago, \ours{} allocated 2.4GB and 4.1GB respectively, whereas the $\lpecc$-based approach allocated 3.5GB and 9.5GB.

\section{Conclusions}
We have presented a combinatorial $(2-2/k)$-approximation algorithm for \minecc{} that comes with a much faster theoretical runtime than solving and rounding a canonical LP relaxation. In practice, the canonical LP still produces extremely good results and is often fast, but in some cases can also be very slow. An open direction is to try to obtain even better empirical runtimes for our \ours{} by using even faster max-flow solvers.

\section*{Acknowledgments}
Both authors are supported by AFOSR Award Number FA9550-25-1-0151. Nate Veldt is additionally supported by ARO Award Number W911NF24-1-0156. The views and conclusions contained in this document are those of the authors and should not
be interpreted as representing the official policies, either expressed or implied, of the Army Research Office
or the U.S. Government.

\bibliographystyle{plain}
\bibliography{refs}

\end{document}